\def\Tr{\operatorname{Tr}}
\def\>{\rangle}
\def\<{\langle}
\def\N#1{\left|\!\left|{#1}\right|\!\right|}
\def\mE{\mathcal{E}}
\def\mN{\mathcal{N}}
\def\sH{\mathcal{H}}
\def\openone{\mathds{1}}
\newcommand{\set}[1]{\mathcal{#1}}
\newcommand{\op}[1]{\mathsf{#1}}
\newcommand{\pguess}{P_{\operatorname{guess}}}
\newcommand{\defeq}{\stackrel{\textup{\tiny def}}{=}}
\newcommand{\Hmin}{H_{\operatorname{min}}}
\newcommand{\succth}{\succeq_\omega}
\renewcommand{\qedsymbol}{\nobreak \ifvmode \relax \else
  \ifdim \lastskip<1.5em \hskip-\lastskip \hskip1.5em plus0em
  minus0.5em \fi \nobreak \vrule height0.75em width0.5em
  depth0.25em\fi}
\renewcommand{\ge}{\geqslant}
\renewcommand{\le}{\leqslant}
\newtheorem{theorem}{Theorem}
\newtheorem{lemma}{Lemma}
\newtheorem{proposition}{Proposition}
\newtheorem{definition}{Definition}
\theoremstyle{remark}
\theoremstyle{definition}
\newcommand{\bea}{\begin{eqnarray}}
\newcommand{\eea}{\end{eqnarray}}
\newcommand{\be}{\begin{equation}}
\newcommand{\ee}{\end{equation}}
\begin{document}

%\SetWatermarkText{NOT FOR DISTRIBUTION}
%\SetWatermarkAngle{60}
%\SetWatermarkScale{0.5}

\title{Fully quantum second-law--like statements\\from the theory of statistical comparisons}

\author{{\small \sc Francesco Buscemi}\footnote{buscemi@is.nagoya-u.ac.jp}\\
    \footnotesize{\em Graduate School of Information Science, Nagoya University}\\
    \footnotesize{\em Furo-cho, Chikusa-ku, Nagoya, 464-8601 Japan}}

\date{\today}

\maketitle

\begin{abstract}
	In generalized resource theories, one aims to reformulate the problem of deciding whether a suitable transition (typically a transition that preserves the Gibbs state of the theory) between two given states $\rho$ and $\sigma$ exists or not, into the problem of checking whether a set of second-law--like inequalities hold or not. The aim of these preliminary notes is to show how the theory of statistical comparisons (in the sense of Blackwell, LeCam, and Torgersen) can be useful in such scenarios. In particular, we propose one construction, in which the second laws are formulated in terms of a suitable conditional min-entropy. Though a general, fully quantum result is also presented, stronger results are obtained for the case of qubits, and the case of $\sigma$ commuting with the Gibbs state.
\end{abstract}

%%%%%%%%%%%%%%%%%%%%%%%%%%%%%%%%%%%%%%%%%%%

Recently, a great deal of attention has been devoted to the reconstruction of thermodynamics from operational/information-theoretic first principles~\cite{rio_thermodynamic_2011,faist_quantitative_2012,aberg_truly_2013,brandao_resource_2013,gour_resource_2013,horodecki_fundamental_2013,aberg_catalytic_2014,cwiklinski_towards_2014,frenzel_reexamination_2014,reeb_improved_2014,narasimhachar_low-temperature_2014,skrzypczyk_work_2014,alhambra_what_2015,brandao_second_2015,lostaglio_quantum_2015,lostaglio_description_2015,weilenmann_axiomatic_2015,gallego_defining_2015}, within the context of `resource theories'~\cite{brandao_resource_2013,horodecki_quantumness_2012,coecke_mathematical_2014,brandao_general_2015}. The benefits of such an approach are apparent: one the one hand, the mathematical foundations of thermodynamics are simplified, so that older results can be generalized and new ideas can be developed in a unified rigorous framework. On the other hand, the operational ground is finally cleared from old preconceptions that, stemming from a `classical' intuition, hindered for too long the development of a consistent, fully quantum theory of thermodynamics.

Particular merit, towards the resurgence of general interest in this direction, surely goes to the work of Lieb and Yngvason~\cite{lieb_physics_1999,lieb_entropy_2013,lieb_entropy_2014}, in which it is shown that, under natural assumptions, an adiabatic process connecting two given states exists, if and only if the entropy of the initial state is no less than the entropy of the final one~\cite{lieb_physics_1999}. In other words, adiabatic processes, in the Lieb-Yngvason framework, are completely characterized as `entropy-non-increasing transitions' from one state to another, perfectly in line with an `order theoretical' approach.

The general viewpoint advocated by Lieb and Yngvason is that of looking upon thermodynamics as `just another' utility theory, formally similar, to some extent, to game theory, financial mathematics, and, ultimately, statistical decision theory. Indeed, the fundamental problem, in thermodynamics as in other utility theories, is that of giving necessary and sufficient conditions for the existence of a `transition' (whose properties are defined by the theory itself) between given families of states, in terms of a set of operationally well-motivated utility functions.

The aim of these preliminary notes is to argue that, for this kind of problems, a suitable framework is provided by the theory of statistical comparisons, laid down in the classical case by Blackwell~\cite{blackwell_equivalent_1953}, LeCam~\cite{cam1964sufficiency,cam_asymptotic_1986} and Torgersen~\cite{torgersen1970comparison,torgersen_comparison_1991}, and recently generalized to the quantum case~\cite{buscemi2012comparison,matsumoto2010quantum,jencova_comparison_2012}. Such a framework has already been applied to various problems in quantum information theory~\cite{buscemi2012all,buscemi_game-theoretic_2014,jencova_randomization_2014}, quantum estimation theory~\cite{matsumoto2012input}, quantum measurement theory~\cite{buscemi2005clean,matsumoto_convertibility_2014}, and the theory of quantum Markov processes~\cite{matsumoto_loss_2012,buscemi_complete_2014,buscemi_equivalence_2014}. It is the aim of these notes to show how the theory of statistical comparisons may be useful in quantum thermodynamics (and other resource theories) too.

\section{Preliminary remarks}

In what follows, as an example, we consider the case of a resource theory with one Gibbs state $\omega$: such state could be the thermal state corresponding to some preferred Hamiltonian at some given temperature (i.e., the case of athermality resource theory), or could be the state invariant under the action of a group of transformations (i.e., the case of asymmetry resource theory).

The goal we are aiming at is to show that the theory of comparisons of statistical models provides a good framework to study statements like the following:
\begin{quote}
	\textbf{A second-law--like statement}. An `$\omega$-preserving transition' from $\rho$ to $\sigma$ is possible if and only if $\sigma$ is `closer' than $\rho$ to $\omega$, i.e., the Gibbs state of the theory.
\end{quote}
In the above statement, the term `closer' should be expressed in terms of a set of inequalities, formalizing the idea that $\sigma$ is no more `resourceful' than $\rho$, in analogy with the conventional interpretation of the second law of thermodynamics as a decrease of free energy.

A very closely related question we are going to answer is the following:
\begin{quote}
	\textbf{Converse to the resource-processing inequality}. What are the most general transitions that never increase the `operational distances' with the fixed Gibbs state of the theory?
\end{quote}
As before, the term `operational distance' needs to be defined: we will do this in what follows.

\subsection{Notation}

In what follows, all sets are finite and Hilbert spaces are finite-dimensional.
\begin{itemize}
	\item Sets are denoted by $\set{X}=\{x:x\in\set{X}\}$, $\set{Y}=\{y:y\in\set{Y}\}$, etc.
	\item A probability distribution over $\set{X}$ is a function $p:\set{X}\to[0,1]$ such that $\sum_xp(x)=1$.
	\item The set of all probability distributions over $\set{X}$ is denoted by $\op{P}(\set{X})$.
	\item Quantum systems are labeled by capital letters $Q$, $R$, etc, and the associated Hilbert spaces are denoted by $\sH_Q$, $\sH_R$, etc.
	\item The set of linear operators acting on a Hilbert space $\sH$ is denoted by $\op{L}(\sH)$.
	\item States of $Q$ are represented by density matrices, i.e., operators $\rho\in\op{L}(\sH)$ such that $\rho\ge0$ and $\Tr[\rho]=1$.
	\item The set of density matrices acting on a Hilbert space $\sH$ is denoted by $\op{S}(\sH)$.
	\item An ensemble $\mE$ is given by a probability distribution $p\in\op{P}(\set{X})$ and a family of density matrices $\{\rho^x:x\in\set{X}\}$. It models the situation in which the uncertainty about the state of a quantum system is purely classical, i.e., the system is known to be in one of the states $\rho^x$ with probability $p(x)$.
	\item A positive-operator valued measure (POVM) is a function $P:\set{X}\to\op{L}(\sH)$ such that $P(x)\ge0$ and $\sum_xP(x)=\openone$. For the sake of readability, we will often write the argument $x$ as a superscript, i.e., $P^x$ rather than $P(x)$.
	\item The set of POVMs from $\set{X}$ to $\op{L}(\sH)$ is denoted by $\op{M}(\set{X},\sH)$.
	\item Transitions (mappings) between states of two (generally different) quantum systems $Q$ and $Q'$ are restricted in this paper to completely positive trace-preserving (CPTP) linear maps $\Phi:\op{L}(\sH_Q)\to\op{L}(\sH_{Q'})$. We will often use the word `quantum channel' as a synonym of `CPTP linear map.'
\end{itemize}

\section{Statistical comparisons of pairs of quantum states}

Suppose we are given two pairs of probability distributions $(p_0,p_1)$ and $(q_0,q_1)$, or two pairs of density matrices $(\rho_0,\rho_1)$ and $(\sigma_0,\sigma_1)$. What pair, in each case, is `more informative?' Surely, the answer depends on the task at hand, i.e., how is the `information content' defined and measured.

To answer this sort of questions constitutes one of the main the motivations for a vast body of works, collectively referred to as the theory of comparisons of statistical models~\cite{torgersen_comparison_1991, cohen_comparisons_1998, liese_statistical_2008}. The main idea, roughly summarized, is to show that the existence of a suitable transition $p_i\mapsto q_i$ ($\rho_i\mapsto\sigma_i$), for $i=0,1$, is equivalent to the existence of a family of functions $\{f_t\}_t$ such that $f_t(p_0,p_1)\ge f_t(q_0,q_1)$ ($f_t(\rho_0,\rho_1)\ge f_t(\rho_0,\rho_1)$) for all $t$.

The point is that one typically tries to base the comparison on functions $f_t$ that enjoy a direct operational interpretation, so that they can be seen as \textit{utility functions}. While in some cases it might be that just one function is enough to completely answer the question about the existence of a transition, most results involve a parameter $t$ that is continuous, so that the functions to be considered are actually uncountably many. This however does not make the results so obtained any less striking or less powerful: in fact, deep theories (above all, LeCam's asymptotic theory of estimation~\cite{cam_asymptotic_1986}) have been constructed starting from these ideas.

It should be clear now how the theory of statistical comparison is apt to investigate second-law--like statements as that presented above: if an `$\omega$-preserving transition' is to be found between $\rho$ and $\sigma$, we can consider the pairs $(\rho,\omega)$ and $(\sigma,\omega)$. Therefore, the problem of formulating second-law--like statements for the existence of an $\omega$-preserving transition between \textit{two states}, can be seen as a special case of statistical comparisons between \textit{two pairs} of states.

Intentionally leaving aside the enormous amount of results valid when comparing families of probability distributions (referred to as \textit{statistical models} or \textit{statistical experiments}, see for example the monumental work by Torgersen~\cite{torgersen_comparison_1991}), in these notes we will focus on the semiquantum and fully quantum cases.

\section{The information-bearing ordering}\label{sec:info-ordering}

Let us consider the following situation. Imagine that a quantum system $Q$ can be prepared either in state $\rho_0$ or in state $\rho_1$, while another quantum system $Q'$ can be prepared either in state $\sigma_0$ or in state $\sigma_1$. Which setup is preferable from an information-theoretic viewpoint? Equivalently: given two pairs of density matrices, $(\rho_0,\rho_1)$ in $\op{S}(\sH_Q)$ and $(\sigma_0,\sigma_1)$ in $\op{S}(\sH_{Q'})$, which pair is \textit{always} more informative?

Surely, if there exists a quantum channel $\Phi:\op{L}(\sH_Q)\to\op{L}(\sH_{Q'})$ such that
\begin{equation}\label{eq:channel}
\Phi(\rho_i)=\sigma_i,\qquad i=0,1,
\end{equation}
then it is clear that the pair $(\rho_0,\rho_1)$ is \textit{always} more informative than $(\sigma_0,\sigma_1)$, simply because \textit{anything} that can be done with $(\sigma_0,\sigma_1)$ can also be done with $(\rho_0,\rho_1)$ upon processing the states with the channel $\Phi$. However, in general, neither a channel from the $\rho_i$'s to the $\sigma_i$'s, nor a channel from the $\sigma_i$'s to the $\rho_i$'s exists, so we expect that the ordering `being always more informative than' be just a partial ordering. The question we want to answer is the following: is it possible to give a non-trivial, operationally motivated, mathematically sound definition of `being always more informative,' implying the existence of a channel from the more informative pair to the less informative one?

We resort to the following construction. Since the quantum system $Q$ can be prepared either in state $\rho_0$ or in state $\rho_1$, we imagine to use $Q$ as an information-bearing system, i.e., we construct a classical-to-quantum channel $\mN_\rho:\set{X}\to\op{S}(\sH_Q)$, with $\set{X}=\{0,1\}$, such the input signal $i$ corresponds to output $\rho_i$. As it is customary in information theory, the action of the channel $\mN_\rho$ is linearly extended to any probabilistic mixture of input symbols, so that an initial distribution $(p,1-p)\in\op{P}(\set{X})$ is mapped by $\mN_\rho$ to the convex mixture $p\rho_0+(1-p)\rho_1$.

Typically, the channel $\mN_\rho$ is used to transmit correlations, i.e., it is used on correlated inputs: for any other set $\set{U}$ and any given joint probability distribution $p\in\op{P}(\set{U}\times\set{X})$, the action of $\mN_\rho$ gives rise to the bipartite classical-quantum state
\[
\begin{split}
\rho_{UQ}&\defeq\sum_{u\in\set{U}}\sum_{x\in\set{X}}p(u,x)|u\>\<u|_U\otimes\rho^x_Q\\
&=\sum_{u\in\set{U}}p(u)|u\>\<u|_U\otimes\rho^u_Q,
\end{split}
\]
with $\rho^u_Q\defeq\sum_{x\in\set{X}}p(x|u)\rho^x_Q$. In the above equation, we used the conventional notation, in which the classical random variable $U$ is fictitiously `encoded' on the perfectly distinguishable pure states $|u\>\<u|_U$ of an auxiliary quantum system (also labeled by $U$), associated with the Hilbert space $\sH_U$.

We then look at the probability that an observer, having access to $Q$, can guess the correct value of $U$, i.e.,
\[
\pguess(U|Q)_\rho\defeq\max_{P\in\op{M}(\set{U},\sH_Q)}\sum_{u\in\set{U}}p(u)\Tr[\rho^u_Q\ P^u_Q].
\]
As shown in Ref.~\cite{konig2009operational}, the surprisal of the guessing probability is equal to the conditional min-entropy, i.e.,
\[
-\log_2 \pguess(U|Q)_\rho=\Hmin(U|Q)_\rho.
\]

Of course, the same construction can be repeated for the pair $(\sigma_0,\sigma_1)$: in this case, we obtain the channel $\mN_\sigma$ that, for any set $\set{U}$ and any joint probability distribution $p\in\op{P}(\set{U}\times\set{X})$, gives rise to the corresponding cq-state given by
\[
\sigma_{UQ'}\defeq\sum_{u\in\set{U}}\sum_{x\in\set{X}}p(u,x)|u\>\<u|_U\otimes\sigma^x_{Q'}
\]

We now introduce the following definition.

\begin{definition}[Information ordering]\label{def:information-ordering}
	Let $(\rho_0,\rho_1)$ and $(\sigma_0,\sigma_1)$ be two pairs of density matrices in $\op{S}(\sH_Q)$ and in $\op{S}(\sH_{Q'})$, respectively. We say that $(\rho_0,\rho_1)$ is \emph{`more informative'} than $(\sigma_0,\sigma_1)$, and write \[(\rho_0,\rho_1)\succeq (\sigma_0,\sigma_1),\] if and only if, for any set $\set{U}$ and any joint probability distribution $p\in\op{P}(\set{U}\times\set{X})$,
	\[
	\Hmin(U|Q)_\rho\le\Hmin(U|Q')_\sigma.
	\]
\end{definition}

The information ordering given above embodies a very strong notion of distinguishability, one that looks at how good a pair of states is for encoding information, \textit{for all possible such encodings}. In the particular case in which $\set{U}$ is binary, the condition $(\rho_0,\rho_1)\succeq (\sigma_0,\sigma_1)$ is essentially equivalent (see below the proof of Lemma~\ref{lem:qubit}) to saying that $\N{\pi_0\rho_0-\pi_1\rho_1}_1\ge\N{\pi_0\sigma_0-\pi_1\sigma_1}_1$, for all choices of prior probabilities $\pi_0=1-\pi_1$.

Suppose now that $\rho_1=\sigma_1=\omega$, where $\omega$ is the Gibbs state (i.e., the `resourceless' state) of some resource theory. We can then introduce the following definition.

\begin{definition}[Thermal ordering]\label{def:thermal-ordering}
	Let $\rho$ and $\sigma$ be two density matrices in $\op{S}(\sH)$, and let $\omega\in\op{S}(\sH)$ be the Gibbs (or \emph{thermal}) state of the theory. We say that $\rho$ is \emph{`less thermal'} than $\sigma$, and write
	\[
	\rho\succth\sigma,
	\]
	if and only if $(\rho,\omega)\succeq(\sigma,\omega)$.
\end{definition}

In other words, $\rho\succth\sigma$ if and only if $\rho$ is not `less distinguishable' from $\omega$ than $\sigma$.

\section{The case of qubits}

In the case in which both pairs consists of 2-by-2 density matrices, we have the following:

\begin{lemma}\label{lem:qubit}
	Suppose that $\rho_0,\rho_1,\sigma_0,\sigma_1\in\op{S}(\mathbb{C}^2)$. Then, $(\rho_0,\rho_1)\succeq (\sigma_0,\sigma_1)$ if and only if there exists a CPTP map $\Phi:\op{L}(\mathbb{C}^2)\to\op{L}(\mathbb{C}^2)$ such that $\Phi(\rho_i)=\sigma_i$, for $i=0,1$. In fact, the comparison $(\rho_0,\rho_1)\succeq (\sigma_0,\sigma_1)$ can be restricted to sets $\set{U}$ with only two elements.
\end{lemma}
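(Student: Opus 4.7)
The plan is to establish a three-way chain of equivalences,
\[
(\rho_0,\rho_1)\succeq(\sigma_0,\sigma_1) \;\Longleftrightarrow\; \N{\pi_0\rho_0-\pi_1\rho_1}_1\ge\N{\pi_0\sigma_0-\pi_1\sigma_1}_1 \text{ for every binary prior} \;\Longleftrightarrow\; \exists\,\Phi \text{ CPTP with }\Phi(\rho_i)=\sigma_i,
\]
the last link being the Alberti--Uhlmann theorem, which is specifically a dimension-two result.

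For the easy direction I would use the data-processing inequality for $\Hmin$: given a CPTP map $\Phi$ with $\Phi(\rho_i)=\sigma_i$, the channel $\id_U\otimes\Phi$ sends every cq-state $\rho_{UQ}$ built via $\mN_\rho$ to the corresponding $\sigma_{UQ'}$ built via $\mN_\sigma$, and processing the $Q$-side cannot decrease $\Hmin(U|Q)$. This argument uses no dimensional assumption and closes the chain from right to left.

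For the converse, I would restrict the defining condition to binary $\set{U}$ with perfectly correlated joint distributions of the form $p(u,x)=p(u)\,\delta_{u,x}$, so that $\rho^u=\rho_u$. The guessing task then becomes Helstrom-type binary state discrimination; combining the operational identity of~\cite{konig2009operational} with the familiar formula $\pguess=\tfrac12\bigl(1+\N{\pi_0\rho_0-\pi_1\rho_1}_1\bigr)$ turns the $\Hmin$-hypothesis into $\N{\pi_0\rho_0-\pi_1\rho_1}_1\ge\N{\pi_0\sigma_0-\pi_1\sigma_1}_1$ for every binary prior. A rescaling argument lifts this to $\N{a\rho_0-b\rho_1}_1\ge\N{a\sigma_0-b\sigma_1}_1$ for all $a,b\ge0$, and for opposite signs both sides collapse to $a+b$ by positivity of the $\rho_i$'s and $\sigma_i$'s; hence the full Alberti--Uhlmann condition $\N{\rho_0-t\rho_1}_1\ge\N{\sigma_0-t\sigma_1}_1$ holds for every $t\in\mathbb{R}$.

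The proof concludes by invoking the Alberti--Uhlmann theorem for qubits, which furnishes the desired CPTP map. The ``binary $\set{U}$ suffices'' clause emerges as a byproduct, since only binary $\set{U}$ with perfectly correlated encodings was used to extract the trace-distance inequalities. I expect the main obstacle to lie precisely in the appeal to Alberti--Uhlmann: this equivalence is known to fail already in dimension three, so this route does not export to arbitrary dimension without substantial new ingredients, which is presumably why the general result in the paper is treated separately.
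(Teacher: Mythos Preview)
Your proposal is correct and follows essentially the same route as the paper: restrict to binary $\set{U}$, convert the $\Hmin$ hypothesis into Helstrom-type trace-norm inequalities, and invoke Alberti--Uhlmann. The only cosmetic difference is that the paper allows arbitrary binary joint distributions $p\in\op{P}(\set{U}\times\set{X})$, obtaining coefficients $p(a,0)-p(b,0)$ and $p(b,1)-p(a,1)$ whose ratio ranges over all of $\mathbb{R}$, whereas you take the cleaner special case of perfectly correlated encodings (yielding $t=\pi_1/\pi_0\ge0$) and then dispose of $t<0$ by the positivity observation; since Alberti--Uhlmann only needs $t>0$ anyway, both arguments land in the same place.
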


The above proposition is a rather direct consequence of a result by Alberti and Uhlmann~\cite{alberti_problem_1980}:

\begin{theorem}[Alberti-Uhlmann, 1980]
	Given two pair of states $(\rho_0,\rho_1)$ and $(\sigma_0,\sigma_1)$, all in $\op{S}(\mathbb{C}^2)$, there exists a CPTP map $\Phi$ such that $\Phi(\rho_i)=\sigma_i$ if and only if
	\begin{equation}\label{eq:alberti}
	\N{\rho_0-t\rho_1}_1\ge \N{\sigma_0-t\sigma_1}_1,
	\end{equation}
	for all $t\in\mathbb{R}$.
\end{theorem}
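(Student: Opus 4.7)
The plan is to prove both assertions of the lemma at once by closing a triangle: (a) the existence of a CPTP map $\Phi$ with $\Phi(\rho_i)=\sigma_i$ implies the full information ordering $(\rho_0,\rho_1)\succeq(\sigma_0,\sigma_1)$ over \emph{all} sets $\set{U}$; (b) the information ordering, restricted to binary $\set{U}$, implies the Alberti--Uhlmann trace-norm inequalities \eqref{eq:alberti}; and (c) the Alberti--Uhlmann theorem turns those inequalities back into a CPTP map $\Phi$. Going around the loop yields the equivalence claimed by the lemma and also shows that the apparently stronger full ordering is actually equivalent to its restriction to two-element $\set{U}$.

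Step (a) is a routine data-processing observation: for any $\set{U}$ and any $p\in\op{P}(\set{U}\times\set{X})$, the map $\id_U\otimes\Phi$ sends $\rho_{UQ}$ to $\sigma_{UQ'}$; since any POVM $\{P^u\}$ on $Q'$ pulls back to the POVM $\{\Phi^*(P^u)\}$ on $Q$ with the same guessing statistics, we get $\pguess(U|Q)_\rho\ge\pguess(U|Q')_\sigma$, i.e., $\Hmin(U|Q)_\rho\le\Hmin(U|Q')_\sigma$.

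The substantive step is (b). I would take $\set{U}=\set{X}=\{0,1\}$ with the perfectly correlated joint distribution $p(u,x)=\pi_u\,\delta_{u,x}$, so that the conditioned states reduce to $\rho^u_Q=\rho_u$ and $\sigma^u_{Q'}=\sigma_u$. Under this choice, $\pguess(U|Q)_\rho$ is exactly the Helstrom success probability for binary discrimination of $\rho_0,\rho_1$ with priors $(\pi_0,\pi_1)$:
\[
\pguess(U|Q)_\rho=\tfrac12\bigl(1+\N{\pi_0\rho_0-\pi_1\rho_1}_1\bigr),
\]
and analogously for $\sigma$. The assumed binary information ordering therefore translates into
\[
\N{\pi_0\rho_0-\pi_1\rho_1}_1\ge\N{\pi_0\sigma_0-\pi_1\sigma_1}_1\qquad\forall\,\pi_0\in[0,1],\ \pi_1=1-\pi_0.
\]
Setting $\pi_0=1/(1+t)$ and $\pi_1=t/(1+t)$ and multiplying through by $1+t>0$ yields \eqref{eq:alberti} for every $t\ge 0$. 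For $t<0$, the operators $\rho_0-t\rho_1$ and $\sigma_0-t\sigma_1$ are positive, so their trace norms both equal $1-t$ and \eqref{eq:alberti} holds with equality. Invoking the Alberti--Uhlmann theorem now closes the triangle and supplies the required $\Phi$ on $\op{L}(\mathbb{C}^2)$.

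I expect the only mildly delicate point to be the parameter bookkeeping: the information ordering only controls convex weights in $[0,1]$, whereas \eqref{eq:alberti} ranges over all real $t$. The trivial treatment of $t<0$ by positivity is the one place where the equivalence is not a pure reparametrization and should be stated explicitly. Everything else is essentially the König--Renner--Schaffner identification of the min-entropy with the guessing probability, followed by the standard Helstrom formula, so the argument really is ``a rather direct consequence'' of Alberti--Uhlmann.
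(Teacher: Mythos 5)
You have proved the wrong statement. The statement under review is the Alberti--Uhlmann theorem itself: the equivalence between the existence of a CPTP map $\Phi$ with $\Phi(\rho_i)=\sigma_i$ on qubits and the family of trace-norm inequalities \eqref{eq:alberti}. Your argument never addresses this; instead it establishes Lemma~\ref{lem:qubit} (the equivalence of the information ordering $\succeq$ with the existence of $\Phi$), and it does so precisely by \emph{invoking} the Alberti--Uhlmann theorem as a black box in step (c). As a proof of the target statement this is circular: the one genuinely hard implication --- that the inequalities $\N{\rho_0-t\rho_1}_1\ge\N{\sigma_0-t\sigma_1}_1$ for all $t$ suffice to \emph{construct} a qubit channel mapping $\rho_i$ to $\sigma_i$ --- is exactly what you assume. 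Nothing in your proposal engages with the $2\times 2$ structure that makes this converse true (and that fails already for $\rho_i\in\op{S}(\mathbb{C}^3)$, as the paper notes); the paper itself does not reprove it either, but cites Alberti and Uhlmann, whose argument requires a genuine analysis of qubit states and positive maps, not a reduction to guessing probabilities.

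That said, the material you did write is essentially correct and coincides with the paper's proof of Lemma~\ref{lem:qubit}: the data-processing direction (a) is standard, and your step (b) --- specializing to binary $\set{U}$ with the perfectly correlated distribution, applying the Helstrom formula, reparametrizing the prior as $\pi_0=1/(1+t)$ for $t\ge0$, and disposing of $t<0$ by positivity --- recovers \eqref{eq:alberti} exactly as the paper does (the paper uses a general $p\in\op{P}(\set{U}\times\set{X})$ and the ratio $t=[p(b,1)-p(a,1)]/[p(a,0)-p(b,0)]$, but this is the same computation). So your write-up would be a fine submission for Lemma~\ref{lem:qubit}. For the statement actually posed, you must either supply a proof of the Alberti--Uhlmann equivalence itself or explicitly acknowledge that it is being imported as an external result, in which case there is nothing left to prove here.
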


(In fact, the original statement has $t\in(0,+\infty)$, but, since the inequality~\eqref{eq:alberti} automatically holds for $t\in(-\infty,0]$, we extend the range of $t$ for the sake of simplicity.)

\begin{proof}[Proof of Lemma~\ref{lem:qubit}]
We only prove the `only if' part (the `if' part is trivial). Restricting the attention to binary $\set{U}=\{a,b\}$, we see that $\pguess(U|Q)_\rho\ge\pguess(U|Q')_\sigma$ for all $p\in\op{P}(\set{U}\times\set{X})$, is equivalent to
\[
\begin{split}
&\N{\{p(a,0)-p(b,0)\}\rho^0_Q-\{p(b,1)-p(a,1)\}\rho^1_Q}_1\\
\ge&\N{\{p(a,0)-p(b,0)\}\sigma^0_{Q'}-\{p(b,1)-p(a,1)\}\sigma^1_{Q'}}_1,
\end{split}
\]
for all $p\in\op{P}(\set{U}\times\set{X})$. If $p(a,0)-p(b,0)=0$, then the above condition is automatically satisfied. We can therefore restrict our attention to the case $p(a,0)-p(b,0)=s\neq 0$. Using the fact that $\N{cX}_1=|c|\N{X}_1$, the above condition is equivalent to
\[
\begin{split}
&\N{\rho^0_Q-\frac{p(b,1)-p(a,1)}{s}\rho^1_Q}_1\\
\ge&\N{\sigma^0_{Q'}-\frac{p(b,1)-p(a,1)}{s}\sigma^1_{Q'}}_1,
\end{split}
\]
for all $p\in\op{P}(\set{U}\times\set{X})$. By varying $p$ in $\op{P}(\set{U}\times\set{X})$, it is possible to achieve any value for $t=[p(b,1)-p(a,1)]/[p(a,0)-p(b,0)]$ in $(-\infty,+\infty)$. The theorem by Alberti and Uhlmann therefore guarantees the existence of a CPTP map between $\rho_i$ and $\sigma_i$.
\end{proof}

Lemma~\ref{lem:qubit} provides a neat second-law--like statement valid in the case of 2-by-2 density matrices:

\begin{proposition}[Fully quantum second-law--like statement for qubits]
	\label{prop:thermo-qubit}
	Given two states $\rho,\sigma\in\op{S}(\mathbb{C}^2)$, there exists an $\omega$-preserving transition from $\rho$ to $\sigma$ if and only if
	$\rho\succth\sigma$.
	\end{proposition}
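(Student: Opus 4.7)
The plan is to observe that this proposition is a direct specialization of Lemma~\ref{lem:qubit}, obtained by taking the second member of each pair to be the Gibbs state $\omega$. Concretely, I would unpack the definition of $\rho\succth\sigma$, which by Definition~\ref{def:thermal-ordering} means exactly $(\rho,\omega)\succeq(\sigma,\omega)$, and then feed the pairs $(\rho_0,\rho_1)=(\rho,\omega)$ and $(\sigma_0,\sigma_1)=(\sigma,\omega)$ into Lemma~\ref{lem:qubit}.

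For the `if' direction, Lemma~\ref{lem:qubit} guarantees, from $(\rho,\omega)\succeq(\sigma,\omega)$, the existence of a CPTP map $\Phi:\op{L}(\mathbb{C}^2)\to\op{L}(\mathbb{C}^2)$ such that $\Phi(\rho)=\sigma$ \emph{and} $\Phi(\omega)=\omega$. The second equality is precisely the $\omega$-preserving (Gibbs-preserving) property, so $\Phi$ is an admissible transition in the resource theory. For the `only if' direction, any $\omega$-preserving CPTP map $\Phi$ that sends $\rho$ to $\sigma$ automatically satisfies $\Phi(\rho)=\sigma$ and $\Phi(\omega)=\omega$, and since the existence of such a map is the sufficient side of Lemma~\ref{lem:qubit}, we immediately obtain $(\rho,\omega)\succeq(\sigma,\omega)$, i.e.\ $\rho\succth\sigma$.

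There is essentially no hard step here, since all the real work is already done by the Alberti--Uhlmann theorem (together with its reformulation via min-entropies in the proof of Lemma~\ref{lem:qubit}). The only thing one must make explicit is that the CPTP map supplied by Lemma~\ref{lem:qubit} simultaneously transports $\rho\mapsto\sigma$ and fixes $\omega$, which is what turns a general CPTP map into an $\omega$-preserving one. Once that observation is in place, the proposition is a one-line corollary of Lemma~\ref{lem:qubit}.
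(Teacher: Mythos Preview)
Your proposal is correct and matches the paper's own treatment: the paper simply states that Lemma~\ref{lem:qubit} ``provides a neat second-law--like statement'' and presents Proposition~\ref{prop:thermo-qubit} as an immediate corollary, without further argument. Your explicit unpacking of Definition~\ref{def:thermal-ordering} and the observation that the CPTP map from Lemma~\ref{lem:qubit} automatically fixes $\omega$ is exactly the one-line specialization the paper has in mind.
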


Unfortunately, however, the above corollary cannot be generalized to higher dimensions, since counterexamples to the Alberti-Uhlmann theorem are known already when $\rho_i\in\op{S}(\mathbb{C}^3)$ and $\sigma_i\in\op{S}(\mathbb{C}^2)$~\cite{matsumoto_example_2014}. For the completely general case, we will need therefore a new ordering, stronger than that given in Definition~\ref{def:information-ordering} (for details, see Section~\ref{sec:full}).

In contrast, in the classical case, i.e., when $[\rho_0,\rho_1]=[\sigma_0,\sigma_1]=0$ (equivalently, $\rho_i$ and $\sigma_i$ are pairs of probability distributions $p_i\in\op{P}(\set{Z})$ and $q_i\in\op{P}(\set{Z}')$, respectively), a binary $\set{U}$ is \textit{always} sufficient. In jargon, this fact is stated saying that \textit{tests (2-decision problems) are sufficient for (classical) dichotomies}~\cite{blackwell_equivalent_1953}. Other results involve the comparisons using a complete set of $f$-divergences~\cite{torgersen1970comparison,torgersen_comparison_1991,liese_statistical_2008,matsumoto_condition_2014}. Anything more than a mention to some of the many deep results formulated in the classical scenario is far beyond the scope of the present notes.

\section{The semiquantum case: $[\sigma_0,\sigma_1]=0$}

A restricted, but still physically relevant scenario, is that in which the final state is known to commute the with Gibbs state---in thermodynamics, one would say that $\sigma$ is block-diagonal in the energy basis. Such a case is formalized by requiring that $[\sigma_0,\sigma_1]=0$. If this holds, we have the following:

\begin{lemma}\label{lem:semiquantum}
	Let $(\rho_0,\rho_1)$ and $(\sigma_0,\sigma_1)$ be two pairs of density matrices in $\op{S}(\sH_Q)$ and $\op{S}(\sH_{Q'})$, respectively. Assume moreover that $[\sigma_0,\sigma_1]=0$. Then, $(\rho_0,\rho_1)\succeq (\sigma_0,\sigma_1)$ if and only if there exists a CPTP map $\Phi:\op{L}(\sH_Q)\to\op{L}(\sH_{Q'})$ such that $\Phi(\rho_i)=\sigma_i$, for $i=0,1$. In fact, the comparison $(\rho_0,\rho_1)\succeq (\sigma_0,\sigma_1)$ can be restricted to sets $\set{U}$ with $|\set{U}|=\dim\sH_{Q'}$.
\end{lemma}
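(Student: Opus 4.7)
The plan is to reduce the existence of $\Phi$ to the feasibility of a POVM identity and then dualize: the ordering restricted to $|\set{U}|=d'$, where $d'\defeq\dim\sH_{Q'}$, will furnish exactly the separating-hyperplane witness we need.

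First I would exploit the commutativity of $(\sigma_0,\sigma_1)$. Fix a joint eigenbasis $\{|k\>\}_{k=1}^{d'}$ and set $q_i(k)\defeq\<k|\sigma_i|k\>$. Because the completely dephasing channel $\mathcal{D}$ in this basis fixes both $\sigma_0$ and $\sigma_1$, any CPTP map $\Phi$ with $\Phi(\rho_i)=\sigma_i$ may be replaced by $\mathcal{D}\circ\Phi$, which has the measure-and-prepare form $X\mapsto\sum_k\Tr[X M^k]\,|k\>\<k|$ for some POVM $\{M^k\}_{k=1}^{d'}$ on $\sH_Q$. Hence the existence of $\Phi$ is equivalent to the linear feasibility problem of finding a POVM $\{M^k\}_{k=1}^{d'}$ on $\sH_Q$ satisfying $\Tr[\rho_i M^k]=q_i(k)$ for $i=0,1$ and all $k$.

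Next I would invoke Hahn--Banach separation. Let $C\subset\mathbb{R}^{2d'}$ be the compact convex image of the POVM simplex under $\{M^k\}\mapsto(\Tr[\rho_0 M^k],\Tr[\rho_1 M^k])_{k=1}^{d'}$, and let $t\defeq(q_0(k),q_1(k))_{k=1}^{d'}$ be the target. Assume for contradiction that $t\notin C$. Then there exist reals $\beta_i(k)$ with
\[
\sum_{i,k}\beta_i(k)\,q_i(k)\;>\;\sup_{M}\sum_{i,k}\beta_i(k)\,\Tr[\rho_i M^k].
\]
Both sides change by the same additive $2c$ under the uniform shift $\beta_i(k)\mapsto\beta_i(k)+c$, since $\sum_{i,k}\Tr[\rho_i M^k]=\sum_{i,k}q_i(k)=2$; so we may assume $\beta_i(k)\ge 0$, and after rescaling the $\beta_i(k)$'s furnish a joint probability distribution $p\in\op{P}(\set{U}\times\set{X})$ with $\set{U}=\{1,\dots,d'\}$ and $\set{X}=\{0,1\}$ via $p(u,x)\defeq\beta_x(u)$.

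Finally I would read both sides as guessing probabilities. Using $p(u)\rho^u_Q=\sum_x p(u,x)\rho_x$, the right-hand supremum is exactly $\pguess(U|Q)_\rho$ for this $p$. The left-hand side is $\sum_u p(u)\<u|\sigma^u_{Q'}|u\>$, which lower-bounds $\pguess(U|Q')_\sigma$ via the diagonal POVM $\{|u\>\<u|\}_u$. Chaining gives $\pguess(U|Q')_\sigma>\pguess(U|Q)_\rho$, i.e.\ $\Hmin(U|Q)_\rho>\Hmin(U|Q')_\sigma$, contradicting the information ordering already at $|\set{U}|=d'$. Hence $t\in C$ and the desired CPTP map exists. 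The reverse implication is the standard data-processing observation: any measurement strategy for $U$ on $\sigma^u_{Q'}$ can be simulated on $\rho^u_Q$ by first applying $\Phi$, so $\pguess(U|Q)_\rho\ge\pguess(U|Q')_\sigma$ for every $p$. The only delicate bookkeeping is the sign-conversion that turns the separating functional into a bona fide joint distribution; the POVM normalization $\sum_k M^k=\openone$ (matched by $\sum_k(q_0(k)+q_1(k))=2$) is precisely what makes this step automatic.
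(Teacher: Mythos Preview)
The paper does not actually prove this lemma: immediately after the statement it says ``This fact, whose detailed proof will be given in a forthcoming paper,'' and moves on. There is therefore nothing in the paper to compare your argument against.

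On its own merits your proof is correct. The reduction via the dephasing channel $\mathcal{D}$ to a measure-and-prepare map is exactly the right way to exploit $[\sigma_0,\sigma_1]=0$, and it turns the question into a finite-dimensional linear feasibility problem for a POVM. The Hahn--Banach step is sound: the image $C$ is compact convex because the set of $d'$-outcome POVMs is compact convex and your map is linear; strict separation of $t\notin C$ then yields the functional $\beta$. Your shift-invariance observation is the key bookkeeping point and it checks out, since both $\sum_{i,k}q_i(k)$ and $\sum_{i,k}\Tr[\rho_i M^k]$ equal $2$ for every POVM $\{M^k\}$; after shifting, the $\beta$'s cannot all vanish (else the strict inequality would read $0>0$), so normalization to a genuine $p\in\op{P}(\set{U}\times\set{X})$ is legitimate. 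The identification of the two sides with $\pguess(U|Q)_\rho$ and a lower bound on $\pguess(U|Q')_\sigma$ is correct, and the contradiction with the ordering at $|\set{U}|=d'$ establishes the ``in fact'' clause as well. The converse via data processing is immediate.
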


This fact, whose detailed proof will be given in a forthcoming paper~\cite{buscemi_forthcoming_1}, directly implies the following semiquantum second-law--like statement:

\begin{proposition}[Semiquantum second-law--like statement]
	\label{prop:thermo-semiquantum}
	Given two states $\rho\in\op{S}(\sH_Q)$ and $\sigma\in\op{S}(\sH_{Q'})$, with $[\sigma,\omega]=0$ (i.e., $\sigma$ is block-diagonal on the energy basis), there exists an $\omega$-preserving transition from $\rho$ to $\sigma$ if and only if
	$\rho\succth\sigma$.
\end{proposition}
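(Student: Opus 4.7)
The proposition is essentially a direct corollary of Lemma~\ref{lem:semiquantum}, and the plan is to spell out the specialization that makes this explicit. The substitution I would make is $\rho_0\defeq\rho$, $\rho_1\defeq\omega_Q$ (the Gibbs state on $\sH_Q$), $\sigma_0\defeq\sigma$, and $\sigma_1\defeq\omega_{Q'}$ (the Gibbs state on $\sH_{Q'}$). With this choice, the hypothesis $[\sigma,\omega_{Q'}]=0$ of the proposition is exactly the commutation hypothesis $[\sigma_0,\sigma_1]=0$ of Lemma~\ref{lem:semiquantum}, and the relation $\rho\succth\sigma$ unfolds, by Definition~\ref{def:thermal-ordering}, to $(\rho,\omega_Q)\succeq(\sigma,\omega_{Q'})$, which is precisely the informational premise on which the lemma operates.

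From there the two implications fall out symmetrically. For the \emph{only if} direction, if $\Phi:\op{L}(\sH_Q)\to\op{L}(\sH_{Q'})$ is an $\omega$-preserving CPTP map with $\Phi(\rho)=\sigma$ and $\Phi(\omega_Q)=\omega_{Q'}$, then the data-processing inequality for the conditional min-entropy (applied to the cq-states built from $\mN_\rho$ and $\mN_\sigma=(\Phi\circ\mN_\rho)$) yields $\Hmin(U|Q)_\rho\le\Hmin(U|Q')_\sigma$ for every set $\set{U}$ and every $p\in\op{P}(\set{U}\times\set{X})$, so that $\rho\succth\sigma$. For the \emph{if} direction, Lemma~\ref{lem:semiquantum} directly supplies, from $(\rho,\omega_Q)\succeq(\sigma,\omega_{Q'})$, a CPTP map $\Phi$ with $\Phi(\rho)=\sigma$ and $\Phi(\omega_Q)=\omega_{Q'}$; the latter condition is the very definition of an $\omega$-preserving transition.

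The only substantive point to check carefully is the translation of ``$\omega$-preserving'' between systems of different dimension: Definition~\ref{def:thermal-ordering} is stated for $\rho$ and $\sigma$ living in the same Hilbert space, while Proposition~\ref{prop:thermo-semiquantum} permits $\sH_Q\neq\sH_{Q'}$. I would therefore briefly remark that $\rho\succth\sigma$ is interpreted as $(\rho,\omega_Q)\succeq(\sigma,\omega_{Q'})$, and correspondingly that an ``$\omega$-preserving'' channel between different systems means a CPTP map sending $\omega_Q$ to $\omega_{Q'}$. Once this notational point is fixed, the proof is a two-line invocation of Lemma~\ref{lem:semiquantum}.

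I do not expect a genuine obstacle here: all the technical content has been absorbed into Lemma~\ref{lem:semiquantum} (whose proof is deferred to the forthcoming paper~\cite{buscemi_forthcoming_1}), and what remains for the proposition is solely to identify the correct pairs of states and to state the consequence. Accordingly, the proof I would write is little more than a sentence or two making the above identifications explicit.
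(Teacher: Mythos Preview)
Your proposal is correct and matches the paper's own treatment: the paper does not give a separate proof of Proposition~\ref{prop:thermo-semiquantum} but simply states that it is directly implied by Lemma~\ref{lem:semiquantum}, which is exactly the specialization you carry out. Your remark about the interpretation of $\succth$ and ``$\omega$-preserving'' across different Hilbert spaces is a helpful clarification that the paper leaves implicit.
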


\section{The fully general case}\label{sec:full}

In the general quantum case in which neither the $\rho_i$'s nor the $\sigma_i$'s are assumed to be commuting, Lemma~\ref{lem:semiquantum} above does not guarantee the existence of a quantum channel, but only the existence of a \textit{statistical morphism} (namely, a suitable generalization of the concept of positive CPTP linear maps~\cite{buscemi2012comparison,matsumoto_example_2014}).

We therefore need to introduce a new ordering, stronger (i.e., more stringent) than $\succeq$. In order to do this, we start with a definition:

\begin{definition}[Complete cq-channels] Given a Hilbert space $\sH$, a \emph{complete cq-channel on $\sH$} is defined as a classical-to-quantum channel $\mN:\set{Y}\to\op{S}(\sH)$, such that $|\set{Y}|=(\dim\sH)^2$, $y\mapsto\tau_y$, and $\operatorname{span}\{\tau_y:y\in\set{Y}\}=\op{L}(\sH)$.
\end{definition}

We then extend the construction of Section~\ref{sec:info-ordering} as follows. For any set $\set{U}$, any auxiliary Hilbert space $\sH_R$, any complete cq-channel $y\mapsto\tau^y_R$, and any joint probability distribution $p\in\op{P}(\set{U}\times\set{Y}\times\set{X})$, we define
\[
\begin{split}
\tilde{\rho}_{URQ}&\defeq\sum_{u\in\set{U}}\sum_{y\in\set{Y}}\sum_{x\in\set{X}}p(u,y,x)|u\>\<u|_U\otimes\tau^y_R\otimes\rho^x_Q\\
&=\sum_{u\in\set{U}}p(u)|u\>\<u|_U\otimes\tilde{\rho}^u_{RQ},
\end{split}
\]
with $\tilde{\rho}^u_{RQ}\defeq\sum_{y\in\set{Y}}\sum_{x\in\set{X}}p(y,x|u)\tau^y_R\otimes\rho^x_Q$. In the same way, we also define
\[
\begin{split}
\tilde{\sigma}_{URQ'}&\defeq\sum_{u\in\set{U}}\sum_{y\in\set{Y}}\sum_{x\in\set{X}}p(u,y,x)|u\>\<u|_U\otimes\tau^y_R\otimes\sigma^x_{Q'}\\
&=\sum_{u\in\set{U}}p(u)|u\>\<u|_U\otimes\tilde{\sigma}^u_{RQ'},
\end{split}
\]
with $\tilde{\sigma}^u_{RQ'}\defeq\sum_{y\in\set{Y}}\sum_{x\in\set{X}}p(y,x|u)\tau^y_R\otimes\sigma^x_{Q'}$.

What we want to do is to compare the information content of $(\rho_0,\rho_1)$ with that of $(\sigma_0,\sigma_1)$, not only in the case in which they are the only quantum resources available, but also in the case in which an additional quantum resource, represented by the family of states $\{t^y_R:y\in\set{Y}\}$, is (or could be, in principle) added to both.

\begin{definition}[Complete information ordering]
Let $(\rho_0,\rho_1)$ and $(\sigma_0,\sigma_1)$ be two pairs of density matrices in $\op{S}(\sH_Q)$ and in $\op{S}(\sH_{Q'})$, respectively. We say that $(\rho_0,\rho_1)$ is \emph{`completely more informative'} than $(\sigma_0,\sigma_1)$, and write \[(\rho_0,\rho_1)\succeq^* (\sigma_0,\sigma_1),\] if and only if, for any set $\set{U}$, any auxiliary Hilbert space $\sH_R$, any complete cq-channel $y\mapsto\tau^y_R$, and any joint probability distribution $p\in\op{P}(\set{U}\times\set{Y}\times\set{X})$
\[
\Hmin(U|RQ)_{\tilde{\rho}}\le\Hmin(U|RQ')_{\tilde{\sigma}}.
\]
In particular, we will write $\rho\succth^*\sigma$ if and only if $(\rho,\omega)\succeq^*(\sigma,\omega)$.
\end{definition}

We can then state the following:

\begin{lemma}\label{lem:quantum}
	Let $(\rho_0,\rho_1)$ and $(\sigma_0,\sigma_1)$ be two pairs of density matrices in $\op{S}(\sH_Q)$ and $\op{S}(\sH_{Q'})$, respectively. Then, $(\rho_0,\rho_1)\succeq^* (\sigma_0,\sigma_1)$ if and only if there exists a CPTP map $\Phi:\op{L}(\sH_Q)\to\op{L}(\sH_{Q'})$ such that $\Phi(\rho_i)=\sigma_i$, for $i=0,1$. In fact, the comparison $(\rho_0,\rho_1)\succeq^* (\sigma_0,\sigma_1)$ can be restricted to Hilbert spaces $\sH_R\cong\sH_{Q'}$ and sets $\set{U}$ with $|\set{U}|=(\dim\sH_{Q'})^2$.
\end{lemma}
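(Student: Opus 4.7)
The ``if'' direction is immediate from the data-processing inequality for conditional min-entropy: if $\Phi$ is CPTP with $\Phi(\rho_i)=\sigma_i$, then $(\id_U\otimes\id_R\otimes\Phi)(\tilde\rho_{URQ})=\tilde\sigma_{URQ'}$, and applying a CPTP map on the conditioning system cannot decrease $\Hmin(U|\cdot)$, so $\Hmin(U|RQ)_{\tilde\rho}\le\Hmin(U|RQ')_{\tilde\sigma}$ for every choice of $\set{U}$, $\sH_R$, complete cq-channel, and distribution $p$.

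For the converse I would argue by contrapositive and convex separation. The set $\set{C}\defeq\{(\Phi(\rho_0),\Phi(\rho_1)):\Phi\text{ CPTP}\}\subseteq\op{L}(\sH_{Q'})\oplus\op{L}(\sH_{Q'})$ is compact and convex; if $(\sigma_0,\sigma_1)\notin\set{C}$, then SDP duality applied to the Choi-matrix feasibility program for $\Phi$ (equivalently, Hahn--Banach) yields Hermitian witnesses $H_0,H_1\in\op{L}(\sH_{Q'})$ satisfying
\[
\Tr[H_0\sigma_0]+\Tr[H_1\sigma_1]\;>\;\sup_{\Phi\text{ CPTP}}\bigl\{\Tr[H_0\Phi(\rho_0)]+\Tr[H_1\Phi(\rho_1)]\bigr\}.
\]
The task then becomes to realise the left-hand side as a guessing probability on the $\sigma$-side and to bound the $\rho$-side guessing probability above by the right-hand side.

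The natural way to carry out this realisation is to exploit the informational completeness of the cq-channel. Setting $\sH_R\cong\sH_{Q'}$ and $|\set{Y}|=(\dim\sH_{Q'})^2$, the family $\{\tau^y\}$ spans $\op{L}(\sH_R)$, so any POVM $\{P^u_{RQ'}\}$ on $\sH_R\otimes\sH_{Q'}$ is uniquely determined by the readouts $\Tr_R[(\tau^y\otimes\openone_{Q'})P^u]$; dually, every $P^u$ decomposes as $\sum_y\tau^y\otimes F^u_y$ for suitable $F^u_y\in\op{L}(\sH_{Q'})$, while CPTP maps $\op{L}(\sH_Q)\to\op{L}(\sH_{Q'})$ are parametrised via PSD Choi operators on $\sH_Q\otimes\sH_{Q'}$ with the usual partial-trace constraint. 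Using this correspondence I would choose $|\set{U}|=(\dim\sH_{Q'})^2$, construct $\{P^u\}$ from the witnesses $(H_0,H_1)$ after a standard positive-shift-and-rescaling to render them admissible as POVM elements, and pick the joint distribution $p(u,y,x)$ so that the expectations on the $\sigma$-side reconstruct $\Tr[H_0\sigma_0]+\Tr[H_1\sigma_1]$, while any POVM on the $\rho$-side is forced, through the same normalisation, to correspond to a genuine CPTP-image of $(\rho_0,\rho_1)$ and hence to the supremum in the separation.

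The main obstacle I anticipate lies precisely in this translation: passing from the abstract separating pair $(H_0,H_1)$ to a concrete $p\in\op{P}(\set{U}\times\set{Y}\times\set{X})$ and $\{P^u\}$ while preserving both the strict inequality and the POVM/probability normalisation constraints. The completeness of $\{\tau^y\}$ supplies exactly the $d_{Q'}^{\,2}$ linear degrees of freedom needed to represent any Hermitian witness pair, and the dimensional bounds $|\set{U}|=|\set{Y}|=d_{Q'}^{\,2}$ in the statement reflect this matching count. The most delicate technical point is to verify that, thanks to the complete cq-channel, the $\rho$-side optimisation is genuinely bounded by CPTP---not merely positive---maps of $(\rho_0,\rho_1)$; this is what distinguishes the $\succeq^*$ ordering from the weaker $\succeq$ and what makes the full CP upgrade (beyond the mere statistical morphism produced in the general quantum case) possible.
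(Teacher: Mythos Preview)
The paper itself does not prove this lemma---immediately after stating it, the text reads ``This fact, whose detailed proof will be given in a forthcoming paper''---so there is no in-paper argument against which to compare your proposal.

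On the merits of your outline: the ``if'' direction via data processing for $\Hmin$ is correct. For the converse, the separation opening is sound; $\set{C}$ is convex and compact, and a Hermitian pair $(H_0,H_1)$ separating $(\sigma_0,\sigma_1)$ from $\set{C}$ exists. The substantive gap is the one you yourself flag: you have not exhibited a concrete $p\in\op{P}(\set{U}\times\set{Y}\times\set{X})$ and POVM such that (i) the $\sigma$-side guessing probability realises the functional $\Tr[H_0\sigma_0]+\Tr[H_1\sigma_1]$ and, more importantly, (ii) the \emph{maximum} over all POVMs on $\sH_R\otimes\sH_Q$ is dominated by the CPTP supremum. Your sentence ``any POVM on the $\rho$-side is forced\ldots to correspond to a genuine CPTP-image'' is precisely what must be proved, and informational completeness of $\{\tau^y\}$ alone does not make it evident: a generic POVM element on $\sH_R\otimes\sH_Q$, expanded as $\sum_y\tau^y\otimes F^u_y$, carries no positivity constraint on the $F^u_y$ and no automatic Choi-matrix structure. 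In the author's own earlier work (the reference cited here as \texttt{buscemi2012comparison}) the upgrade from a statistical morphism to a CPTP map is obtained not by embedding a single separating hyperplane into one guessing game, but by showing that sufficiency of $\{\tau^y\otimes\rho^x\}$ for the \emph{informationally complete} family $\{\tau^y\otimes\sigma^x\}$ forces complete positivity of the connecting map on $Q$. Your sketch could plausibly be completed along either route, but as written it is a strategy with the decisive step left open rather than a proof.
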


This fact, whose detailed proof will be given in a forthcoming paper~\cite{buscemi_forthcoming_1}, directly implies the following second-law--like statement:

\begin{proposition}[Fully quantum second-law--like statement]
	\label{prop:thermo-quantum}
	Given two states $\rho\in\op{S}(\sH_Q)$ and $\sigma\in\op{S}(\sH_{Q'})$, there exists an $\omega$-preserving transition from $\rho$ to $\sigma$ if and only if
	$\rho\succth^*\sigma$.
\end{proposition}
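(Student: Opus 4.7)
The proposition is essentially a direct corollary of Lemma~\ref{lem:quantum}, obtained by specializing to the pair with second component fixed to the Gibbs state $\omega$. So the plan is to unpack the definitions and then invoke the lemma.

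For the forward direction, assume $\rho\succth^*\sigma$. By Definition of the thermal ordering (extended to the $\succeq^*$ variant), this means exactly that $(\rho,\omega)\succeq^*(\sigma,\omega)$, i.e., the two pairs $(\rho_0,\rho_1)=(\rho,\omega)$ and $(\sigma_0,\sigma_1)=(\sigma,\omega)$ satisfy the hypothesis of Lemma~\ref{lem:quantum}. The lemma therefore yields a CPTP map $\Phi:\op{L}(\sH_Q)\to\op{L}(\sH_{Q'})$ such that $\Phi(\rho_i)=\sigma_i$ for $i=0,1$, i.e., $\Phi(\rho)=\sigma$ and $\Phi(\omega)=\omega$. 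The latter equation is precisely the statement that $\Phi$ is $\omega$-preserving, so $\Phi$ is the desired $\omega$-preserving transition from $\rho$ to $\sigma$.

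For the converse, suppose there exists a CPTP map $\Phi$ with $\Phi(\rho)=\sigma$ and $\Phi(\omega)=\omega$. Then $\Phi$ sends the pair $(\rho,\omega)$ to $(\sigma,\omega)$, which is the trivial (data-processing) direction of Lemma~\ref{lem:quantum}: any guessing strategy on the $\sigma$-side can be pulled back through $\Phi$, so the min-entropy inequality $\Hmin(U|RQ)_{\tilde{\rho}}\le\Hmin(U|RQ')_{\tilde{\sigma}}$ holds for every choice of $\set{U}$, auxiliary $\sH_R$, complete cq-channel $\tau_R^y$, and joint distribution $p$. Hence $(\rho,\omega)\succeq^*(\sigma,\omega)$, i.e., $\rho\succth^*\sigma$.

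Since both implications reduce to Lemma~\ref{lem:quantum}, no new obstacle arises at the level of the proposition itself; the whole weight of the argument has been displaced to the lemma, whose nontrivial (``only if'') direction is precisely the statistical-morphism-to-channel step that requires the auxiliary system $R$ and the complete cq-channel $\tau^y_R$ to upgrade a mere positive map between the ``$\rho$-models'' and ``$\sigma$-models'' to a completely positive one. That upgrade, deferred to~\cite{buscemi_forthcoming_1}, is the true technical heart of the fully quantum result; the proposition stated here is the clean thermodynamic packaging of that lemma in the case where the reference state is the Gibbs state.
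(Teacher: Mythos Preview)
Your proposal is correct and matches the paper's own treatment: the paper simply states that Proposition~\ref{prop:thermo-quantum} is directly implied by Lemma~\ref{lem:quantum}, and you have spelled out exactly that reduction by specializing the lemma to the pairs $(\rho,\omega)$ and $(\sigma,\omega)$. Your added commentary correctly identifies that the substantive work lies in the ``only if'' direction of the lemma, which the paper defers to~\cite{buscemi_forthcoming_1}.
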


\section{Concluding remarks}

\subsection{From partial orderings to total orderings}

The existence of a channel from one pair of states into another induces a partial ordering, in the sense that, in general, neither $\rho_i\mapsto\sigma_i$ nor viceversa. A way to introduce a \textit{total} ordering is to allow for an accuracy parameter $\epsilon\ge0$. This constitutes the main advantage of LeCam's theory of $\epsilon$-deficiencies with respect to Blackwell's theory of sufficiency. Indeed, in the classical scenario, a wealth of results are known for such an approximate scenario~\cite{cam1964sufficiency,cam_asymptotic_1986,torgersen_comparison_1991,liese_statistical_2008}, so that the problem can be considered, at least with respect to the main theoretical framework, complete. The same cannot be unfortunately said for the quantum case, despite some encouraging results in this direction~\cite{matsumoto2010quantum,jencova_comparison_2012}, mainly because the concepts of `quantum decisions,' needed in such a scenario, lacks of a clear-cut, direct operational interpretation. Further investigations are ongoing.

\subsection{Less noisy channels}

If, in Definition~\ref{def:information-ordering}, we replace $\Hmin$ with the standard (von Neumann) conditional entropy, then we have the definition of \textit{less noisy channels}~\cite{Korner1977,ElGamal1977}. Notice, however, that while the relation `less noisy' is known to be necessary, but not sufficient, for degradability~\cite{Korner1977}, the `one-shot' analogue proposed here is instead equivalent to degradability. This fact provides an interesting connection between statistical comparisons and information-theoretic orderings of communication channels, that will be the subject of a forthcoming contribution~\cite{buscemi_forthcoming_2} (along similar lines, see also Ref.~\cite{Raginsky2011}).

\appendix

\bibliography{C:/Users/Francesco/Desktop/library,C:/Users/Francesco/Desktop/zoterolibrary}

\end{document}